\documentclass[11pt, letterpaper,reqno]{amsart}

\addtolength{\hoffset}{-1.95cm} \addtolength{\textwidth}{3.9cm}
\addtolength{\voffset}{-1.75cm}
\addtolength{\textheight}{2.4cm}

\usepackage{amsmath}
\usepackage{amssymb}
\usepackage{amsfonts}
\usepackage{ dsfont }
\usepackage{cases}
\usepackage{cite}
\usepackage[hyphens]{url}
\usepackage{hyperref}
\usepackage{amsthm}
\usepackage{upgreek}
\usepackage{color}
\usepackage{xcolor}

\definecolor{color0}{gray}{.50}
\definecolor{color1}{rgb}{0,.2,.8}
\definecolor{color2}{rgb}{1,.2,0}
\definecolor{color3}{rgb}{.8,.5,1}

\numberwithin{equation}{section}

\newcommand{\Max}{{\rm Max\,}}
\newcommand{\Int}{{\rm int\,}}

\newcommand{\cl}{{\rm cl \,}}
\newcommand{\co}{{\rm co \,}}

\newtheorem{theorem}{Theorem}[section]
\newtheorem{corollary}[theorem]{Corollary}
\newtheorem{lemma}[theorem]{Lemma}
\newtheorem{proposition}[theorem]{Proposition}
\newtheorem{definition}[theorem]{Definition}

\newtheorem{remark}[theorem]{Remark}

\newcommand{\leref}{Lemma~\ref}

\newcommand{\coref}{Corollary~\ref}

\newcommand{\prref}{Proposition~\ref}
\newcommand{\thref}{Theorem~\ref}
\newcommand{\reref}{Remark~\ref}

\newcommand{\E}{\mathbb{E}}
\newcommand{\PP}{\mathbb{P}}
\newcommand{\Q}{\mathbb{Q}}
\newcommand{\R}{\mathbb{R}}
\newcommand{\eps}{\epsilon}
\newcommand{\cA}{\mathcal{A}}

\title[]{{Robust no arbitrage and the solvability of vector-valued utility maximization problems}}
\author[]{Andreas H. Hamel}
\address{{Faculty of Economics and Management, Free University of Bozen-Bolzano}}
\email{Andreas.Hamel@unibz.it}
\author[]{Birgit Rudloff} 
\address{Institute for Statistics and Mathematics, Vienna University of Economics and Business}
\email{birgit.rudloff@wu.ac.at}
\author[]{Zhou Zhou}
\address{School of Mathematics and Statistics, University of Sydney}
\email{zhou.zhou@sydney.edu.au}
\date{\today}
\keywords{Bid-ask spread, robust no arbitrage, Pareto maximizer, utility maximization, consistent price process}

\begin{document}

\begin{abstract}
{A market model with $d$ assets in discrete time is considered where trades are subject to proportional transaction costs given via   bid-ask spreads, while the existence of a num{\'e}raire is not assumed. It is shown that robust no arbitrage holds if, and only if, there exists a Pareto solution for some vector-valued utility maximization problem with component-wise utility functions. Moreover, it is demonstrated that a consistent price process can be constructed from the Pareto maximizer.}
\end{abstract}

\maketitle

\section{Introduction}

The fundamental theorem of asset pricing (FTAP) has been studied extensively in various settings. In the FTAP, the equivalence of no arbitrage (NA) and the existence of an equivalent (super-)martingale measure is established. In market models with transaction costs, the martingale measures are usually replaced by consistent pricing systems (or processes). We refer to \cite{EQF} and the references therein. There is also a stream of literature (e.g., \cite{Sch1, Sch5, Sch6} and the references therein) on general utility maximization with the presence of a num{\'e}raire in the market. In many papers, the existence of solutions (maximizers) is established, and one of the common assumptions is the absence of arbitrage. The easy-to-see fact that the converse is also true is well-known as well, i.e., the existence of a utility maximizer implies NA. For example, in \cite[Section 3.1]{SF}, the authors show both directions for a one-period market model without transaction costs which involves a num{\'e}raire. They show that no arbitrage is equivalent to the existence of a solution for some utility maximization problem. This result goes back to Rogers \cite{Rogers94} who actually constructed equivalent martingale measures from utility maximizers.

The aim of this paper is to provide a corresponding result for market models with proportional transaction costs and a utility maximization problem with component-wise utility functions. This equivalence result between the existence of Pareto maximizers of a such vector-valued utility maximization problem and the well-established robust no-arbitrage condition (introduced in \cite{Sch4}) underlines that from a mathematical point of view the very special case of ``one utility function for each asset" is already enough. Moreover, there is some rational to this situation: the interpretation is that the market model rules the exchange of assets, while ``utilities" (the images of a vector-valued utility function) cannot be exchanged into one another and are therefore compared by means of the order generated by $\R^d_+$ and not a solvency cone (which is usually bigger). It also seems more plausible to have a complete preference for positions of only one asset--instead of a complete preference for portfolio vectors as assumed very often in the literature. Compare \cite{HamelWang17} where this approach was initiated.

A financial market in discrete time is considered in which there are $d$ tradable assets. The existence of a num{\'e}raire is not assumed; instead, following Kabanov \cite{K99}, the assets are considered in physical units where the exchange between these assets are subject to proportional transaction costs. We show that robust no arbitrage holds if, and only if, there exists a solution for some vector-valued utility maximization problem with respect to the order generated by $\R^d_+$ (a Pareto maximizer). A key step to prove the result is to apply Koml{\'o}s's theorem to get the closedness of the set of the vector-valued expected utilities. We also show that a consistent price process can be constructed from the Pareto maximizer.

The note is organized as follows. In the next section, we introduce the setup and the main results. In Section 3, we provide the proofs of the results.

\section{The setup and the main result}
\subsection{Notation and background}

We follow the setup in \cite{Sch4}. Let $(\Omega,(\mathcal{F}_t)_{t=0,\dotso, T},\PP)$ be a filtered probability space in discrete time, where $T\in\mathbb{N}$ is the time horizon, and $\mathcal{F}_0$ is assumed to be trivial. We assume that in the market there are $d$ assets available for exchange. These assets could be interpreted as currencies, but this restriction is not necessary. The exchange of assets is subject to proportional transaction costs; we use it in the form of bid-ask processes as discussed in \cite{Sch4}.

\begin{definition}
A $d\times d$-matrix $\Pi_0=(\pi^{ij})_{i,j=1,\dotso,d}$ is called a bid-ask matrix, if
$$\pi^{ij}>0,\quad\pi^{ii}=1,\quad\text{and}\quad\pi^{ij}\leq\pi^{ik}\pi^{kj},\quad\forall i,j,k=1,\dotso,d.$$
An adapted process $\Pi=(\Pi_t)_{t=0,\dotso,T}$ is called a bid-ask process, if $\Pi_t(\omega)$ is a bid-ask matrix for any $(t,\omega)\in\{0,\dotso,T\}\times\Omega$.
\end{definition}
In the above definition, the entry $\pi_t^{ij}$ in $\Pi_t$ denotes the number of units of asset $i$ an investor needs in order to buy one unit of asset $j$ at time $t$. That is, the bid-ask spread of asset $i$ in terms of asset $j$ is given by $\{1/\pi_t^{ji},\pi_t^{ij}\}$ at time $t$. 

For the bid-ask matrix $\Pi_0=(\pi_{ij})$, let $-K(\Pi_0)$ be the convex cone generated by the vectors $\{-e^i,-\pi^{ij}e^i+e^j,\ i,j=1,\dotso,d\}$, where $e^i=(0,\dotso,0,1,0,\dotso,0)$ with the $i$-th component equal to $1$. In other words, $-K(\Pi_0)$ represents the set of portfolios available starting from initial position $0\in\R^d$ for the assets. Denote $K^*(\Pi_0)$ as the polar of the cone $-K(\Pi_0)$. That is,
$$K^*(\Pi_0):=\{w\in\R^d:\ \langle v,w\rangle\leq 0,\ \forall v\in-K(\Pi_0)\},$$
where $\langle\cdot,\cdot\rangle$ represents the inner product. For any bid-ask process $\Pi=(\Pi_t)_{t=0,\dotso,T}$, similar definitions apply for $-K(\Pi_t(\omega))$ and $K^*(\Pi_t(\omega))$, $(t,\omega)\in\{0,\dotso,T\}\times\Omega$.

Let us recall the definition of self-financing portfolios in \cite{Sch4}. 
\begin{definition}
An $\R^d$-valued adapted process $v=(v_t)_{t=0,\dotso,T}$ is called a self-financing portfolio process (from initial position $0$) for the bid-ask process $\Pi=(\Pi_t)_{t=0,\dotso,T}$, if for $t=0,\dotso,T$,
$$v_t-v_{t-1}\in-K(\Pi_t),\quad\text{ a.s.},$$
where $v_{-1}\equiv 0$.
Denote
$$A_\Pi:=\{v_T:\ \text{for some self-financing portfolio process }(v_t)_{t=0,\dotso,T}\text{ in terms of }\Pi\}.$$
\end{definition}

Recall the notion of (robust) no arbitrage introduced in \cite{Sch4}.

\begin{definition}
We say NA holds with respect to the bid-ask process $\Pi$, if for any $v_T\in A_\Pi$, if $v_T\geq 0$ a.s. then $v_T=0$ a.s., where the (in)equalities are component-wise. We say NA$^r$ holds with respect to $\Pi$, if there exists a bid-ask process $\tilde\Pi=(\tilde\Pi_t)_t$ with smaller bid-ask spreads (i.e., for $t=0,\dotso,T$ and $i,j = 1,\dotso,d$, $[1/\tilde\pi_t^{ji},\tilde\pi_t^{ij}]$ is contained in the relative interior of $[1/\pi_t^{ji},\pi_t^{ij}]$ a.s.), such that NA holds with respect to $\tilde\Pi$.
\end{definition}

Let us also recall the definition of (strictly) consistent price processes.
\begin{definition}
	An $\R_+^d$ valued adapted process $Z=(Z_t)_{t=0,\dotso,T}$ is called a consistent (strictly consistent, respectively) price process for the bid-ask process $\Pi$, if $Z$ is a $\PP$-martingale, and $Z_t(\cdot)\in K_t^*(\cdot)\setminus\{0\}$ (the relative interior of $K_t^*(\cdot)$, respectively) a.s., $t=0,\dotso,T$.
\end{definition}

For $i=1,\dotso,d$, let $U^i:\R_+\mapsto[-\infty,\infty)$ be \textit{concave, continuous, strictly increasing, and bounded from above}, representing the utility function for asset $i$. Define $U:\R_+^d\mapsto\R^d$,
\[
U(x)=(U^1(x^1),\dotso,U^d(x^d)),\quad x=(x^1,\dotso,x^d)\in\R_+^d.
\]
For a given initial endowment $x\in\R_+^d$, consider the $\R^d$-valued utility maximization problem
\begin{equation}\label{e1}
\text{maximize} \quad \E_\PP U(X_T) \quad \text{subject to} \;  X_T\in\cA_\Pi(x),
\end{equation}
where
\[
\cA_\Pi(x):=\{X_T\geq 0\text{ a.s.}:\ X_T\in A_\Pi+x\}
\]
and
\[
\E_\PP U(X_T):=(\E_\PP U^1(X_T^1),\dotso,\E_\PP U^d(X_T^d)),\quad X_T=(X_T^1,\dotso,X_T^d)\in\cA_\Pi(x).
\]
Moreover, we also consider the set 
\[
J_\Pi(U)(x) := \{\E_\PP U(X_T) \colon X_T\in A_\Pi(x)\}.
\]

The meaning of the word ``maximize" in \eqref{e1} has to be defined. We will consider two variants. First, we recall the definition of a Pareto maximizer with respect to the ordering cone $\R^d_+$ and a domination property which is also called (upper) external stability in the literature. For yet another notion, compare \reref{r1} below.

\begin{definition}
(a) A point $\hat z \in M$ is called Pareto maximal for $M \subseteq \R^d$ iff
\[
M \cap (\hat z + \R^d_+) = \{\hat z\}.
\]
The set of Pareto maximal points of $M$ is denoted by $\Max M$.

(b) A set $M \subseteq \R^d$ is said to satisfy the upper domination property if for each $z \in M$ there is $\hat z \in \Max M$ such that $z \leq_{\R^d_+} \hat z$.
\end{definition}

\begin{definition}
A position $\hat X_T \in \cA_\Pi(x)$ is called a Pareto maximizer for \eqref{e1} iff $\E_\PP U(\hat X_T) \in \Max J_\Pi(U)(x)$. The  set of Pareto maximizers for \eqref{e1} is denoted by $P_\Pi(U)(x)$.
\end{definition}

One may easily realize that $\hat X_T \in P_\Pi(U)(x)$ if, and only if, 
there does not exist $X_T\in\cA_\Pi(x)$, such that 
\begin{itemize}
\item[i)] For all $i\in\{1,\dotso,d\},\ \E_\PP U^i(X_T^i)\geq\E_\PP U^i(\hat X_T^i)$;
\item[ii)] For some $i\in\{1,\dotso,d\},\ \E_\PP U^i(X_T^i)>\E_\PP U^i(\hat X_T^i)$.
\end{itemize}

The study of the vector-valued utility maximization problem \eqref{e1} was initiated in \cite{Wang10} and \cite{Wang11}, the former containing a certainty equivalent, the latter duality results in a one-period setup with a finite probability space which were published in \cite{HamelWang17}. In \cite{Zamboni15}, the problem is studied with each of the components of the utility function depending on all the components of the commodity vector in a deterministic and set-valued framework. In \cite{2019arXiv190409456R}, further certainty equivalents as well as indifference prices are defined also for a finite probability space.

\subsection{Main results}

The following two results, \prref{t1} and \thref{t2}, establish the equivalence between no arbitrage and the existence of a Pareto optimizer for \eqref{e1}.

\begin{proposition}\label{t1}
Let $\Pi$ be a bid-ask process. If NA$^r$ holds with respect to $\Pi$, then $J_\Pi(U)(x)$ satisfies the upper domination property. In particular, $P_\Pi(U) \neq \emptyset$. 

Conversely, if $P_\Pi(U) \neq \emptyset$, then NA holds with respect to $\Pi$.




\end{proposition}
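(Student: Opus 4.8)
The plan is to treat the two implications separately: the converse is elementary, while the forward implication carries the analytic weight and relies on the two deep consequences of NA$^r$ established in \cite{Sch4}, namely the existence of a strictly consistent price process and the $L^0$-closedness of $A_\Pi$.

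For the converse, I would argue by contradiction. Suppose $\hat X_T \in P_\Pi(U)(x)$ but NA fails, so there is $v_T \in A_\Pi$ with $v_T \geq 0$ a.s.\ and $\PP(v_T \neq 0) > 0$. Since $A_\Pi$ is a convex cone (concatenation of self-financing strategies gives $A_\Pi + A_\Pi \subseteq A_\Pi$), we have $(\hat X_T - x) + v_T \in A_\Pi$, whence $X_T := \hat X_T + v_T \in \cA_\Pi(x)$ with $X_T \geq \hat X_T \geq 0$. As each $U^i$ is strictly increasing, $U^i(X_T^i) \geq U^i(\hat X_T^i)$ a.s., so $\E_\PP U^i(X_T^i) \geq \E_\PP U^i(\hat X_T^i)$ for every $i$; because $\E_\PP U(\hat X_T) \in \R^d$ is finite these expectations are well defined. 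Choosing $i_0$ with $\PP(v_T^{i_0} > 0) > 0$, strict monotonicity yields $U^{i_0}(X_T^{i_0}) > U^{i_0}(\hat X_T^{i_0})$ on a set of positive measure, which upgrades to a strict inequality in expectation in component $i_0$, contradicting the characterization of Pareto maximality recorded immediately after its definition. Hence NA holds.

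For the forward implication I would first record that $J := J_\Pi(U)(x)$ is bounded above (immediate from $\sup U^i < \infty$) and then establish its \emph{sequential dominated-closedness}: whenever $z_n = \E_\PP U(X_T^{(n)}) \to z$ with the $z_n$ remaining above a fixed level, there is $\hat z \in J$ with $\hat z \geq z$. Fixing a strictly consistent price process $Z$ supplied by NA$^r$, the supermartingale estimate $\E_\PP[\langle Z_T, X_T\rangle] \leq \langle Z_0, x\rangle$, valid for all $X_T \in \cA_\Pi(x)$, combined with $X_T \geq 0$ and $Z_T > 0$ a.s., furnishes a uniform bound on $X_T^{(n)}$ in $L^1$ under an equivalent measure. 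Koml\'os's theorem then yields a subsequence whose Ces\`aro means $\bar X_T^{(n)}$ converge a.s.\ to some $\hat X_T \geq 0$. Each $\bar X_T^{(n)}$ lies in the convex set $\cA_\Pi(x)$, and since NA$^r$ forces $A_\Pi$ to be closed in $L^0$, the a.s.\ limit satisfies $\hat X_T \in \cA_\Pi(x)$. Concavity of $U^i$ (Jensen applied to the Ces\`aro mean) together with boundedness above (reverse Fatou) gives $\E_\PP U^i(\hat X_T^i) \geq \lim_n \tfrac1n\sum_{k\leq n} z_k^i = z^i$, so $\hat z := \E_\PP U(\hat X_T) \in J$ dominates $z$.

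Finally I would deduce the upper domination property from these two features by scalarization. Given $z_0 \in J$, maximize the linear functional $w \mapsto \langle \mathbf{1}, w\rangle$ over the bounded section $J \cap (z_0 + \R^d_+)$; a maximizing sequence converges along a subsequence, and dominated-closedness produces $\hat z \in J$ with $\hat z \geq z_0$ attaining the supremum. Any $y \in J$ with $y \geq \hat z$, $y \neq \hat z$, would strictly increase the functional while remaining in the section, a contradiction; hence $\hat z \in \Max J$ with $\hat z \geq z_0$, and since $J \neq \emptyset$ (as $x \in \cA_\Pi(x)$) this also yields $P_\Pi(U) \neq \emptyset$. The main obstacle is the closedness step of the forward direction: one must correctly marry the $L^1$-bound from the consistent price process with Koml\'os's theorem and Schachermayer's $L^0$-closedness of $A_\Pi$, and then verify that the nonlinear, merely upper-bounded map $X_T \mapsto \E_\PP U(X_T)$ transmits this closedness via Jensen and reverse Fatou—precisely where boundedness-above of the $U^i$ and convexity of $\cA_\Pi(x)$ are indispensable.
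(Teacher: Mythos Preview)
Your proposal is correct and follows the same analytic backbone as the paper: the converse is argued identically, and for the forward direction both you and the paper combine the strictly consistent price process from \cite{Sch4} (for the $L^1$-bound), Koml\'os's theorem (for Ces\`aro convergence), the $L^0$-closedness of $A_\Pi$ under NA$^r$, concavity, and reverse Fatou.

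The one genuine difference is in how the existence of a dominating Pareto point is extracted. The paper pushes the argument one step further and shows that $J_\Pi(U)(x)$ is actually \emph{closed}: after obtaining $\E_\PP U(Y)\geq a$, it uses the free-disposal structure $\cA_\Pi(x)=(\cA_\Pi(x)-L^0_+)\cap L^0_+$ together with continuity of the $U^i$ to find $\tilde Y\in\cA_\Pi(x)$ with $\E_\PP U(\tilde Y)=a$ exactly; compactness of the sections $(z_0+\R^d_+)\cap J$ then lets it invoke a standard vector-optimization result \cite[Theorem~2(i), p.~489]{Luc08}. You instead stop at the weaker ``dominated-closedness'' $\E_\PP U(\hat X_T)\geq a$ and recover the Pareto dominator by hand via scalarization with $\langle\mathbf{1},\cdot\rangle$. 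Your route is slightly more self-contained and avoids the extra free-disposal step and the external reference; the paper's route yields the stronger intermediate conclusion that $J_\Pi(U)(x)$ is closed, which may be of independent interest.
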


\begin{theorem}\label{t2}
NA$^r$ holds with respect to the bid-ask process $\Pi$ if, and only if, there exists a bid-ask process $\tilde\Pi$ with smaller bid-ask spreads satisfying $P_{\tilde\Pi}(U) \neq \emptyset$.
\end{theorem}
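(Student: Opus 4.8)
The plan is to derive both implications of \thref{t2} from \prref{t1} together with the defining condition of NA$^r$; the only genuine work is a ``sandwiching'' construction of bid-ask processes needed for the forward direction.

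For the easy direction, suppose some bid-ask process $\tilde\Pi$ with smaller bid-ask spreads than $\Pi$ satisfies $P_{\tilde\Pi}(U)\neq\emptyset$. The converse part of \prref{t1}, applied to $\tilde\Pi$, yields that NA holds with respect to $\tilde\Pi$. Since $\tilde\Pi$ has smaller bid-ask spreads than $\Pi$ and NA holds for $\tilde\Pi$, the defining condition of NA$^r$ for $\Pi$ is met verbatim, so NA$^r$ holds with respect to $\Pi$.

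For the forward direction, assume NA$^r$ holds with respect to $\Pi$. By definition there is a bid-ask process $\hat\Pi$ with smaller bid-ask spreads than $\Pi$ such that NA holds with respect to $\hat\Pi$. To invoke the first part of \prref{t1} I want an intermediate process $\tilde\Pi$ that still has smaller spreads than $\Pi$ but for which NA$^r$ (not merely NA) holds. I would construct $\tilde\Pi$ entrywise by the geometric mean $\tilde\pi_t^{ij}:=\sqrt{\pi_t^{ij}\hat\pi_t^{ij}}$. This keeps $\tilde\Pi$ adapted, preserves $\tilde\pi_t^{ii}=1$ and positivity, and preserves the multiplicative triangle inequality, since multiplying $\pi_t^{ij}\leq\pi_t^{ik}\pi_t^{kj}$ and $\hat\pi_t^{ij}\leq\hat\pi_t^{ik}\hat\pi_t^{kj}$ and taking square roots gives $\tilde\pi_t^{ij}\leq\tilde\pi_t^{ik}\tilde\pi_t^{kj}$; hence $\tilde\Pi$ is a bid-ask process. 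On every coordinate where the spread of $\Pi$ is nondegenerate one has $\hat\pi_t^{ij}<\pi_t^{ij}$ and $\hat\pi_t^{ji}<\pi_t^{ji}$ a.s., so the geometric mean lies strictly in between, yielding both $[1/\tilde\pi_t^{ji},\tilde\pi_t^{ij}]\subset\mathrm{ri}\,[1/\pi_t^{ji},\pi_t^{ij}]$ and $[1/\hat\pi_t^{ji},\hat\pi_t^{ij}]\subset\mathrm{ri}\,[1/\tilde\pi_t^{ji},\tilde\pi_t^{ij}]$; on degenerate coordinates all three matrices coincide and the relative-interior inclusions hold trivially.

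Consequently $\tilde\Pi$ has smaller bid-ask spreads than $\Pi$, while simultaneously $\hat\Pi$ has smaller bid-ask spreads than $\tilde\Pi$ with NA holding for $\hat\Pi$; by definition the latter says NA$^r$ holds with respect to $\tilde\Pi$. The first part of \prref{t1}, applied to $\tilde\Pi$, then gives $P_{\tilde\Pi}(U)\neq\emptyset$, completing this direction. The main obstacle is precisely this forward step: one must produce an intermediate bid-ask process that is strictly inside $\Pi$ and strictly outside $\hat\Pi$ in the spread ordering while remaining a genuine bid-ask matrix a.s., and the geometric mean is the device that reconciles the multiplicative triangle inequality with the strict relative-interior containment required on both sides.
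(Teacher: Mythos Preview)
Your proof is correct and follows the same strategy as the paper's: both directions reduce to \prref{t1}, with the forward direction requiring an intermediate $\tilde\Pi$ that has smaller spreads than $\Pi$ but for which NA$^r$ (not just NA) holds. The paper simply asserts the existence of such a $\tilde\Pi$ in one line, whereas you supply the explicit geometric-mean sandwiching construction that justifies it; your construction is valid and preserves the bid-ask axioms as claimed.
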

%


\begin{remark}\label{r1}
A comment on the above results might be in order. The function $X_T \mapsto \E_\PP U(X_T)$ can straightforwardly be extended to a function $F$ mapping into the set $\mathcal G(\R^d, -\R^d_+) := \{D \subseteq \R^d \colon D = \cl\co(D - \R^d_+)\}$ by setting
\[
F(X_T) = \E_\PP U(X_T) - \R^d_+.
\]
The pair $(\mathcal G(\R^d, -\R^d_+), \subseteq)$ is a complete lattice. The first part of \prref{t1} can now be read as follows: NA$^r$ implies the existence of a solution of the set optimization problem (in the complete-lattice sense, see, e.g., \cite[Definition 28, Corollary 2.9]{Loehne11}) 
\[
\text{maximize} \quad F(X_T) \quad \text{subject to} \;  X_T\in\cA_\Pi(x).
\]
This follows from \cite[Proposition 2.15]{Loehne11} (adapted to maximization), and in this reference more material about the treatment of vector optimization as complete lattice-valued problems can be found.

Thus, roughly speaking, \thref{t2} yields that robust no arbitrage is equivalent to the existence of solutions of the complete lattice extension of some vector-valued utility maximization problem. One should note that the existence of such a solution does not imply the domination property.
\end{remark}

Motivated by \cite[Corollary 3.10]{SF}, we also establish the following wo results which provide a construction of a (strictly) consistent price process from a Pareto maximizer.

\begin{proposition}\label{p1}
Let NA$^r$ hold with respect to the bid-ask process $\Pi$ and let $\hat X_T\in P_\Pi(U)$. Assume that $U^i$ is differentiable for all $i=1,\dotso,d$, and that there exists $\delta \in\Int \R^d_+$ such that $\hat X_T \geq \delta$ a.s. Then, there exists $(\lambda^1,\dotso,\lambda^d)\in\R_+^d\backslash\{0\}$ such that
$$\left(\left(\lambda^i\E_\PP[(U^i)'(\hat X_T^i)|\mathcal{F}_t]\right)_{i=1,\dotso,d}\right)_{t=0,\dotso,T}$$
defines a consistent price process for $\Pi$.
\end{proposition}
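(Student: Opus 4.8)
The plan is to reduce the vector-valued (Pareto) optimality of $\hat X_T$ to the scalar optimality for a single nonnegative weight vector $\lambda$, and then to read the defining inequalities of $K^*(\Pi_t)$ off the first-order conditions obtained by perturbing $\hat X_T$ along admissible trading directions. Write $Z_t := (\lambda^i\E_\PP[(U^i)'(\hat X_T^i)|\mathcal F_t])_{i=1,\dotso,d}$ for the candidate process. Two of the three defining properties are essentially immediate once $\lambda$ is produced: each component is a $\PP$-martingale by the tower property (the conditioned variable $\lambda^i(U^i)'(\hat X_T^i)$ is integrable, since $0\le (U^i)'(\hat X_T^i)\le (U^i)'(\delta^i)<\infty$ a.s. by concavity of $U^i$ and $\hat X_T\ge\delta$), and $Z_t\in\R^d_+$ because $\lambda\in\R^d_+$ and $(U^i)'>0$ (a strictly increasing concave function has strictly positive derivative). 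The substance lies in producing $\lambda\ne 0$ and in verifying $Z_t\in K^*(\Pi_t)$ a.s.

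\emph{Scalarization.} First I would check that the downward closure $C:=J_\Pi(U)(x)-\R^d_+$ is convex: $\cA_\Pi(x)$ is convex (the intersection of the convex set $A_\Pi+x$ with the nonnegative positions), and each $X_T\mapsto \E_\PP U^i(X_T^i)$ is concave, so any convex combination of utility vectors is dominated componentwise by the utility vector of the corresponding convex combination of positions. Setting $\hat z:=\E_\PP U(\hat X_T)$, Pareto maximality $J_\Pi(U)(x)\cap(\hat z+\R^d_+)=\{\hat z\}$ yields $(\hat z+\Int\R^d_+)\cap C=\emptyset$. Separating the nonempty open convex set $\hat z+\Int\R^d_+$ from the convex set $C$ produces $\lambda\ne 0$ with $\langle\lambda,\hat z+p\rangle\ge\langle\lambda,w\rangle$ for all $p\in\Int\R^d_+$ and $w\in C$; letting $p\to 0$ forces $\lambda\in\R^d_+\setminus\{0\}$, and since $J_\Pi(U)(x)\subseteq C$ we obtain
\[
\sum_{i=1}^d\lambda^i\,\E_\PP U^i(X_T^i)\ \le\ \sum_{i=1}^d\lambda^i\,\E_\PP U^i(\hat X_T^i)\qquad\text{for all }X_T\in\cA_\Pi(x).
\]
Thus $\hat X_T$ maximizes the scalar functional $X_T\mapsto\sum_i\lambda^i\E_\PP U^i(X_T^i)$ over $\cA_\Pi(x)$.

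\emph{First-order conditions.} Fix $t$, indices $i,j$, and $B\in\mathcal F_t$ with $B\subseteq\{\pi_t^{ij}\le n\}$. Since $A_\Pi$ is a convex cone that absorbs any $\mathcal F_t$-measurable increment lying in $-K(\Pi_t)$ (perform the extra trade at time $t$ and hold it; the self-financing increments remain in the respective cones), the position $\hat X_T+\eps(-\pi_t^{ij}e^i+e^j)\mathbf{1}_B$ lies in $A_\Pi+x$ for every $\eps\ge 0$, and it is nonnegative for $\eps\le\delta^i/n$ because $\hat X_T\ge\delta$; hence it is feasible. Inserting it into the scalar objective, using optimality of $\hat X_T$, and differentiating at $\eps=0^+$ (the interchange of limit and expectation is justified by the monotonicity of the concave difference quotients together with the bound $(U^i)'(\hat X_T^i)\le(U^i)'(\delta^i)$) gives
\[
\lambda^j\,\E_\PP[(U^j)'(\hat X_T^j)\mathbf{1}_B]\ -\ \lambda^i\,\E_\PP[\pi_t^{ij}(U^i)'(\hat X_T^i)\mathbf{1}_B]\ \le\ 0 .
\]
As $B\in\mathcal F_t$ and $\pi_t^{ij}$ is $\mathcal F_t$-measurable, this rewrites as $\E_\PP[(Z_t^j-\pi_t^{ij}Z_t^i)\mathbf{1}_B]\le 0$; choosing $B=\{Z_t^j-\pi_t^{ij}Z_t^i>0\}\cap\{\pi_t^{ij}\le n\}$ and letting $n\to\infty$ yields $Z_t^j\le\pi_t^{ij}Z_t^i$ a.s. for all $i,j$. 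Together with $Z_t\ge 0$ these are exactly the inequalities defining $K^*(\Pi_t)$, so $Z_t\in K^*(\Pi_t)$ a.s.; and since some $\lambda^{i_0}>0$ while $(U^{i_0})'>0$, we have $Z_t^{i_0}>0$ a.s., whence $Z_t\in K^*(\Pi_t)\setminus\{0\}$. Combined with the martingale and nonnegativity observations, $Z$ is a consistent price process.

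I expect the delicate points to be, first, the \emph{scalarization}: guaranteeing that the separating functional can be taken in $\R^d_+\setminus\{0\}$, where convexity of $C$ and the strict-interior disjointness $(\hat z+\Int\R^d_+)\cap C=\emptyset$ are essential (NA$^r$, via \prref{t1}, is what makes the problem well posed, ensuring the relevant utility values are finite and $\hat z$ is a genuine Pareto point). Second, the rigorous justification of differentiating under the expectation, which relies crucially on the uniform lower bound $\hat X_T\ge\delta$ to control $(U^i)'$ near the optimum, and on the localization $B\subseteq\{\pi_t^{ij}\le n\}$ to accommodate possibly unbounded transaction-cost coefficients. The remaining steps amount to bookkeeping against the explicit generators $\{-e^i,\,-\pi_t^{ij}e^i+e^j\}$ of $-K(\Pi_t)$.
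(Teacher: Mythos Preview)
Your proof is correct and follows the same two-step architecture as the paper: scalarize the Pareto optimality to a weighted-sum maximization with some $\lambda\in\R^d_+\setminus\{0\}$, then extract the $K^*(\Pi_t)$-membership from first-order conditions along admissible perturbations at time $t$. The differences are only in presentation: where the paper invokes \cite[Proposition~9, p.~497]{Luc08} for the scalarization, you carry out the separation argument by hand on the convex set $J_\Pi(U)(x)-\R^d_+$; and where the paper perturbs by a generic bounded $u_t\in -K(\Pi_t)$ and argues $Z_t\in K^*(\Pi_t)$ by contradiction, you work directly with the explicit generators $-\pi_t^{ij}e^i+e^j$ (localized to $\{\pi_t^{ij}\le n\}$) and read off the defining inequalities $Z_t^j\le\pi_t^{ij}Z_t^i$. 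Your version is more self-contained and also spells out the martingale property and $Z_t\neq 0$, which the paper leaves implicit.
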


\begin{corollary}\label{c1}
Let NA$^r$ hold with respect to the bid-ask process $\Pi$, and thus NA$^r$ still holds with respect to some $\tilde\Pi$ with smaller bid-ask spreads. Let $\hat X_T\in P_{\tilde\Pi}(U)$. Assume that $U^i$ is differentiable for all $i=1,\dotso,d$, and that there exists $\delta\in\Int \R^d_+$ such that $\hat X_T \geq \delta$ a.s. Then, there exists $(\lambda^1,\dotso,\lambda^d)\in\Int \R^d_+$ such that
$$\left(\left(\lambda^i\E_\PP[(U^i)'(\hat X_T^i)|\mathcal{F}_t]\right)_{i=1,\dotso,d}\right)_{t=0,\dotso,T}$$
defines a strictly consistent price process for $\Pi$.
\end{corollary}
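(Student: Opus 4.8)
The plan is to deduce \coref{c1} from \prref{p1} by applying the latter to the auxiliary process $\tilde\Pi$ and then upgrading its conclusion in two steps. First, since NA$^r$ holds with respect to $\Pi$, one can choose $\tilde\Pi$ with smaller bid-ask spreads for which NA$^r$ (not merely NA) still holds: given any $\tilde\Pi'$ with smaller spreads under which NA holds, interpolate a third process $\tilde\Pi$ whose spreads contain those of $\tilde\Pi'$ in their relative interior and are themselves contained in the relative interior of those of $\Pi$; then NA$^r$ holds with respect to $\tilde\Pi$ and $\tilde\Pi$ has smaller spreads than $\Pi$. With $\hat X_T\in P_{\tilde\Pi}(U)$ as given, \prref{p1} applied to $\tilde\Pi$ produces $\lambda=(\lambda^1,\dots,\lambda^d)\in\R^d_+\setminus\{0\}$ such that the process $Z$ with $Z_t^i=\lambda^i\E_\PP[(U^i)'(\hat X_T^i)\mid\mathcal F_t]$ is a consistent price process for $\tilde\Pi$; in particular $Z$ is a $\PP$-martingale (which is in any case immediate from the tower property) and $Z_t\in K_t^*(\tilde\Pi)\setminus\{0\}$ a.s.

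Next I would upgrade $\lambda$ to $\Int\R^d_+$. The explicit description of the dual cone, $K^*(\Pi_0)=\{w\in\R^d:\ w^i\ge 0,\ w^j\le\pi^{ij}w^i\ \forall i,j\}$, shows that whenever all entries $\pi^{ij}$ are finite a nonzero element must be strictly positive: if some $w^i=0$ then $w^j\le\pi^{ij}w^i=0$ for all $j$, forcing $w=0$. Hence $K_t^*(\tilde\Pi)\setminus\{0\}\subseteq\Int\R^d_+$, so $Z_t\in\Int\R^d_+$ a.s. Because each $U^i$ is strictly increasing and differentiable and $\hat X_T^i\ge\delta^i>0$, one has $0<(U^i)'(\hat X_T^i)\le(U^i)'(\delta^i)<\infty$ by concavity, whence $\E_\PP[(U^i)'(\hat X_T^i)\mid\mathcal F_t]>0$ a.s.; comparing with $Z_t^i>0$ forces $\lambda^i>0$ for every $i$, i.e. $\lambda\in\Int\R^d_+$.

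It then remains to promote ``consistent for $\tilde\Pi$'' to ``strictly consistent for $\Pi$''. The martingale property is unchanged, so the point is to show $Z_t$ lies in the relative interior of $K_t^*(\Pi)$ a.s. Here I would use the ratio description: for $w$ with strictly positive components, $w\in K^*(\Pi_0)$ is equivalent to $1/\pi^{ji}\le w^j/w^i\le\pi^{ij}$ for all $i,j$, and membership in the relative interior of $K^*(\Pi_0)$ is equivalent to requiring each ratio $w^j/w^i$ to lie in the relative interior of $[1/\pi^{ji},\pi^{ij}]$. Since $Z_t\in K_t^*(\tilde\Pi)$ has positive components, its ratios satisfy $Z_t^j/Z_t^i\in[1/\tilde\pi_t^{ji},\tilde\pi_t^{ij}]$, and the smaller-spread hypothesis gives $[1/\tilde\pi_t^{ji},\tilde\pi_t^{ij}]\subseteq\mathrm{ri}\,[1/\pi_t^{ji},\pi_t^{ij}]$; hence every ratio lies in the relative interior of the corresponding interval and $Z_t\in\mathrm{ri}\,K_t^*(\Pi)$ a.s. Together with the martingale property this shows $Z$ is a strictly consistent price process for $\Pi$.

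The main obstacle is the geometric lemma in the last paragraph, namely the pairwise characterization of $\mathrm{ri}\,K^*(\Pi_0)$ and the resulting inclusion $K_t^*(\tilde\Pi)\setminus\{0\}\subseteq\mathrm{ri}\,K_t^*(\Pi)$. The delicate point is the treatment of frictionless pairs, where $\pi^{ij}\pi^{ji}=1$ and the interval $[1/\pi^{ji},\pi^{ij}]$ degenerates to a single point; there the corresponding inequalities in the description of $K^*(\Pi_0)$ are implicit equalities cutting down its affine hull, so the relative interior keeps them as equalities while the smaller-spread assumption forces $\tilde\pi_t^{ij}=\pi_t^{ij}$ on such pairs, and the inclusion persists. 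This lemma is essentially the one underlying the passage from NA$^r$ to strictly consistent price systems in \cite{Sch4}, and I would either invoke it or verify it directly via the ratio description above.
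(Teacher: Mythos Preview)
Your argument is correct and follows essentially the same route as the paper. The paper's proof is the single line ``The result follows from \prref{p1} and \cite[Proposition A.5]{Sch4}''; your proposal simply unpacks both ingredients---the application of \prref{p1} to $\tilde\Pi$ (including the interpolation step justifying that NA$^r$ persists for $\tilde\Pi$) and the geometric inclusion $K_t^*(\tilde\Pi)\setminus\{0\}\subseteq\mathrm{ri}\,K_t^*(\Pi)$, which is precisely the content of \cite[Proposition A.5]{Sch4}---and additionally makes explicit the passage from $\lambda\in\R^d_+\setminus\{0\}$ to $\lambda\in\Int\R^d_+$, which in the paper is implicit in the fact that any nonzero element of $K_t^*(\tilde\Pi)$ has strictly positive components.
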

%
%
%

\section{Proof of the results}

\begin{lemma}\label{l1}
Assume NA$^r$ holds with respect to $\Pi$. Then for any sequence $(X_T^n)_{n\in\mathbb{N}}\subset\cA_\Pi(x)$, there exists a subsequence $(X_T^{n_k})_{k\in\mathbb{N}}\subset(X_T^n)_{n\in\mathbb{N}}$ and $Y\in\cA_\Pi(x)$, such that
$$\frac{1}{N}\sum_{k=1}^N X_T^{n_k}\rightarrow Y\ \text{a.s.},\quad N\rightarrow\infty.$$
\end{lemma}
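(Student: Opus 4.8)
The plan is to derive a uniform integrability-type bound from a consistent price process and then invoke Komlós's theorem coordinate-wise, concluding with the closedness of the attainable set under NA$^r$. First I would record two structural facts. Since each $-K(\Pi_t)$ is a convex cone, $A_\Pi$ is a convex cone, so $\cA_\Pi(x)=(A_\Pi+x)\cap\{X_T\ge 0\ \text{a.s.}\}$ is convex; in particular every Cesàro average $\frac1N\sum_{k=1}^N X_T^{n_k}$ of elements of $\cA_\Pi(x)$ again lies in $\cA_\Pi(x)$. Next, by the transaction-cost FTAP of \cite{Sch4}, NA$^r$ yields a consistent price process $Z=(Z_t)_{t=0,\dots,T}$ for $\Pi$. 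I would then observe that every $w\in K^*(\Pi_t)\setminus\{0\}$ is strictly positive: the generators $-\pi_t^{ij}e^i+e^j\in -K(\Pi_t)$ force $w^j\le \pi_t^{ij}w^i$ for all $i,j$, so if one coordinate of $w$ vanished then all would, contradicting $w\neq 0$. Hence $Z_t\in\Int\R^d_+$ a.s. for every $t$.

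The core estimate is a uniform bound. For a self-financing $v$ with $v_T+x=X_T^n$, writing $v_T=\sum_{t=0}^T(v_t-v_{t-1})$ with $v_{-1}\equiv 0$, using the $\mathcal F_t$-measurability of the increments together with the martingale property of $Z$, and the polarity $\langle Z_t,v_t-v_{t-1}\rangle\le 0$ (valid since $v_t-v_{t-1}\in -K(\Pi_t)$ and $Z_t\in K^*(\Pi_t)$), a standard telescoping computation gives $\E_\PP\langle Z_T,v_T\rangle\le 0$ (the integrability needed to justify the conditioning steps is routine). Therefore
$$\E_\PP\langle Z_T,X_T^n\rangle\le\langle Z_0,x\rangle=:C\qquad\text{for all }n.$$
As $X_T^n\ge 0$ and $Z_T\in\Int\R^d_+$, this controls each coordinate, namely $\E_\PP[Z_T^i(X_T^n)^i]\le C$ for every $i$ and $n$.

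Next I would apply Komlós's theorem for nonnegative random variables to each coordinate, passing to a common subsequence by a finite diagonalization over $i=1,\dots,d$, to obtain $(X_T^{n_k})$ whose Cesàro averages $S_N:=\frac1N\sum_{k=1}^N X_T^{n_k}$ converge a.s. to some $Y=(Y^1,\dots,Y^d)$ taking values in $[0,\infty]^d$. Fatou's lemma and the uniform bound give $\E_\PP[Z_T^iY^i]\le\liminf_N\E_\PP[Z_T^iS_N^i]\le C<\infty$, and since $Z_T^i>0$ a.s. this forces $Y^i<\infty$ a.s., so $Y\in\R^d_+$ a.s. Finally, each $S_N\in\cA_\Pi(x)$ by convexity and $S_N\to Y$ a.s., hence in probability; since NA$^r$ makes $A_\Pi$ closed in $L^0$ by \cite{Sch4}, the set $\cA_\Pi(x)$ is closed in $L^0$, whence $Y\in\cA_\Pi(x)$, as required.

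The main obstacle is the third step — ensuring the Komlós limit is finite rather than $+\infty$ on a set of positive probability — and this is exactly where NA$^r$ is indispensable: it supplies the (strictly positive) consistent price process whose associated bound, through Fatou's lemma, rules out blow-up. The only other nontrivial input is Schachermayer's $L^0$-closedness of $A_\Pi$ under NA$^r$, which I would cite from \cite{Sch4} rather than reprove.
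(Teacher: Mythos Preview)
Your proposal is correct and follows essentially the same strategy as the paper: obtain a consistent price process from NA$^r$, use it to derive a uniform bound on the attainable positions, apply Koml{\'o}s's theorem coordinate-wise via an iterated subsequence extraction, and conclude using convexity together with the $L^0$-closedness of $A_\Pi$ from \cite{Sch4}. The only cosmetic difference is that the paper passes to the equivalent measures $d\Q^i/d\PP=Z_T^i/Z_0^i$, so that each coordinate is $L^1(\Q^i)$-bounded and the standard Koml{\'o}s theorem directly yields a finite limit, whereas you remain under $\PP$, invoke a nonnegative-variable version of Koml{\'o}s with a possibly $[0,\infty]$-valued limit, and then exclude blow-up via Fatou and the strict positivity of $Z_T$; the two routes are equivalent.
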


\begin{proof}
By \cite[Theorem 1.7]{Sch4}, there exists a strictly consistent price process $(Z_t)_{t=0,\dotso,T}$. For $i=1,\dotso,d$, define
$$\frac{d\Q^i}{d\PP}:=\frac{Z_T^i}{Z_0^i}.$$
Then, $\Q^i$ is a probability measure that is equivalent to $\PP$, $i=1,\dotso,d$. For any $X_T=(X_T^1,\dotso,X_T^d)\in\cA_\Pi(x)$, we have that for $i=1,\dotso,d$,
\begin{equation}\label{e2}
\E_{\Q^i} X_T^i=\frac{1}{Z_0^i}\E_\PP[Z_T^i X_T^i]\leq\frac{1}{Z_0^i}\E_\PP\langle Z_T,X_T\rangle\leq\frac{1}{Z_0^i}\langle Z_0,x\rangle,
\end{equation}
where the second (in)equality follows from the fact that the components of $Z_T$ and $X_T$ are all nonnegative a.s., and the third (in)equality follows from \cite[Theorem 4.1]{Sch4}.

Now, let $(X_T^n)_{n\in\mathbb{N}}\subset\cA_\Pi(x)$, where $X_T^n=(X_T^{n,1},\dotso,X_T^{n,d})$. First, consider the first component, i.e., the sequence $ (X_T^{n,1})_{n\in\mathbb{N}}$. Since from \eqref{e2}
$$\sup_{n\in\mathbb{N}}\E_\Q X_T^{n,1}\leq\frac{1}{Z_0^1}\langle Z_0,x\rangle<\infty,$$
by Koml{\'o}s's theorem (see e.g., \cite[Theorem 1a]{Komlos}), there exists a subsequence $(X_T^{n_k,1})_{k\in\mathbb{N}}\subset(X_T^{n,1})_{n\in\mathbb{N}}$ and a random variable $Y^1\in\mathcal{F}_T$, such that for any further subsequence $(\eta^j)_{j\in\mathbb{N}}\subset(X_T^{n_k,1})_{k\in\mathbb{N}}$,
\begin{equation}\label{e3}
\frac{1}{N}\sum_{j=1}^N \eta^j\rightarrow Y^1\ \text{a.s.},\quad N\rightarrow\infty.
\end{equation}
Next, consider the sequence $(X_T^{n_k,2})_{k\in\mathbb{N}}\subset(X_T^{n,2})_{n\in\mathbb{N}}$ for the second component. Applying Koml{\'o}s's theorem again, we have that there exist a subsequence $(X_T^{n_{k_l},2})_{l\in\mathbb{N}}\subset(X_T^{n_k,2})_{k\in\mathbb{N}}$ and $Y^2\in\mathcal{F}_T$, such that for any further subsequence $(\xi_j)_{j\in\mathbb{N}}\subset(X_T^{n_{k_l},2})_{l\in\mathbb{N}}$,
$$\frac{1}{N}\sum_{j=1}^N \xi^j\rightarrow Y^2\ \text{a.s.},\quad N\rightarrow\infty.$$
Obviously, \eqref{e3} still holds if we replace the $(X_T^{n_k,1})_{k\in\mathbb{N}}$ with $(X_T^{n_{k_l},1})_{l\in\mathbb{N}}$. Repeating this process for the remaining components, we can finally get a subsequence $(\zeta_i)_{i\in\mathbb{N}}\subset(X_T^n)_{n\in\mathbb{N}}$ and an $\R^d$-valued random variable $Y=(Y^1,\dotso,Y^d)\in\mathcal{F}_T$, such that
$$\frac{1}{N}\sum_{i=1}^N\zeta_i\rightarrow Y\ \text{a.s.},\quad N\rightarrow\infty.$$
Since $\cA_\Pi(x)$ is convex, $\frac{1}{N}\sum_{i=1}^N\zeta_i\in\cA_\Pi(x)$. By \cite[Theorem 2.1]{Sch4}, $\cA_\Pi(x)$ is closed in $L^0(\Omega,\mathcal{F}_T,\PP;\R^d)$, and thus $Y\in\cA_\Pi(x)$.
\end{proof}

\begin{proof}[\textbf{Proof of \prref{t1}}]
First, let us show the first statement of this proposition. Assume that NA$^r$ holds with respect to $\Pi$. It suffices to show that the set $J_\Pi(U)(x) \subset \R^d$ is closed. Indeed, if this is the case, then the set $(\E_\PP U(X_T) + \R_+^d) \cap J_\Pi(U)(x)$ is compact for each $X_T \in \cA_\Pi(x)$ since $U$ is component-wise bounded from above. Therefore, by \cite[Theorem 2(i), page 489]{Luc08}, there is a Pareto maximizer $\bar X_T \in \cA_\Pi(x)$ satisfying $\E_\PP U(X_T) \leq_{\R^d_+} \E_\PP U(\bar X_T)$. In particular, $P_\Pi(U) \neq \emptyset$.

In order to show the closedness of $J_\Pi(U)(x)$, take $(X_T^n)_{n\in\mathbb{N}}\subset\cA_\Pi(x)$, such that
$$\E_\PP U(X_T^n)\rightarrow a\in\left(\prod_{i=1}^d[U^i(0),\infty)\right)\cap\R^d.$$
By \leref{l1}, there exists a subsequence $(X_T^{n_k})_{k\in\mathbb{N}}$ and $Y\in\cA_\Pi(x)$, such that
$$\frac{1}{N}\sum_{k=1}^N X_T^{n_k}\rightarrow Y\ \text{a.s.},\quad N\rightarrow\infty.$$
As $U_i$ is concave, $i=1,\dotso,d$, we have that
$$\E_\PP U\left(\frac{1}{N}\sum_{k=1}^N X_T^{n_k}\right)\geq\frac{1}{N}\sum_{k=1}^N \E_\PP U(X_T^{n_k}),$$
where the inequality is component-wise. As $U$ is continuous and bounded from above, by Fatou's Lemma, one has
$$\E_\PP U(Y)\geq a.$$
Because $\cA_\Pi(x)=(\cA_\Pi(x)-L^0(\Omega,\mathcal{F}_T,\PP;\R_+^d)\cap L^0(\Omega,\mathcal{F}_T,\PP;\R_+^d)$ and $U^i$ is continuous for $i=1,\dotso,d$,  there exists $\tilde Y\in\cA_\Pi(x)$, such that $\E_\PP U(\tilde Y)=a$. This completes the proof of the first statement.

Next, let us prove the second statement. Let $\hat X_T\in\cA_\Pi(x)$ be a Pareto maximizer for \eqref{e1}. If NA fails with respect to $\Pi$, then there would exist some $\bar X_T=(\bar X_T^1,\dotso,\bar X_T^d)\in A_\Pi$, such that
$$\text{for any }i\in\{1,\dotso,d\},\quad\bar X_T^i\geq 0\ \text{a.s.},$$
and
$$\text{for some }i\in\{1,\dotso,d\},\quad\PP\{\bar X_T^i>0\}>0.$$
Then $\hat X_T+\bar X_T\in\cA_\Pi(x)$, and
$$\text{for any }i\in\{1,\dotso,d\},\quad\E_\PP U^i(\hat X_T^i+\bar X_T^i)\geq \E_\PP U^i(\hat X_T^i),$$
and
$$\text{for some }i\in\{1,\dotso,d\},\quad\E_\PP U^i(\hat X_T^i+\bar X_T^i)>\E_\PP U^i(\hat X_T^i).$$
This contradicts the Pareto optimality of $\hat X_T$.
\end{proof}

\begin{proof}[\textbf{Proof of \thref{t2}}]
If NA$^r$ holds with respect to $\Pi$, then there exists $\tilde\Pi$ with smaller bid-ask spreads, such that NA$^r$ also holds with respect to $\tilde\Pi$. By \prref{t1} one has $ P_{\tilde\Pi}(U) \neq \emptyset$.

Conversely, if $P_{\tilde\Pi}(U) \neq \emptyset$ for some $\tilde\Pi$ with smaller bid-ask spreads, then by \prref{t1} NA holds with respect to $\tilde\Pi$, and thus NA$^r$ holds with respect to $\Pi$.
\end{proof}

\begin{proof}[\textbf{Proof of \prref{p1}}]
Since $\hat X_T$ is a Pareto optimizer for \eqref{e1}, by \cite[Proposition 9, page 497]{Luc08} there exists $\lambda=(\lambda^1,\dotso,\lambda^d)\in\R_+^d\backslash\{0\}$, such that $\hat X_T$ is also optimal for the scalar-valued maximization problem
\begin{equation}\label{e4}
\sup_{X_T\in\cA_\Pi(x)}\langle\lambda,\E_\PP U(X_T)\rangle.
\end{equation}
Fix $t\in\{0,\dotso,T\}$. Let $u_t$ be an $\R^d$-valued $\mathcal{F}_t$-measurable random variable satisfying $u_t(\cdot)\in-K(\Pi_t(\cdot))$ and $u_t(\cdot)\geq-\delta/2$ a.s. Since $\hat X_T+\eps u_t\in\cA_\Pi(x)$ for any $\eps\in(0,1)$ and $\hat X_T$ is optimal for \eqref{e4}, we have that 
$$\langle\lambda,\E_\PP U(\hat X_T+\eps u_t)\rangle-\langle\lambda,\E_\PP U(\hat X_T)\rangle\leq 0.$$
Then
$$\sum_{i=1}^d\lambda_i\E_\PP\left[\frac{U^i(\hat X_T^i+\eps u_t^i)-U^i(\hat X_T^i)}{\eps}\right]\leq0.$$
Letting $\eps\searrow 0$ and applying the dominated convergence theorem, we get that
$$\sum_{i=1}^d\lambda_i\E_\PP\left[u_t^i\cdot(U^i)'(\hat X_T^i)\right]\leq0.$$
Therefore,
\begin{equation}\label{e5}
\E_\PP\left[\sum_{i=1}^d\lambda_i\E_\PP\left[(U^i)'(\hat X_T^i)|\mathcal{F}_t\right]u_t^i\right]\leq0.
\end{equation}

Let us show that
$$\left(\lambda_1\E_\PP\left[(U^1)'(\hat X_T^1)|\mathcal{F}_t\right],\dotso, \lambda_d\E_\PP\left[(U^d)'(\hat X_T^d)|\mathcal{F}_t\right]\right)(\cdot)\in K^*(\Pi_t(\cdot)),\quad\text{a.s.}$$
Indeed, if the above fails to be true, then there exists an $\mathcal{F}_t$-measurable $\bar w_t(\cdot)\in -K(\Pi_t(\cdot))$ such that
$$\PP\left\{\sum_{i=1}^d\left(\lambda_i\E_\PP\left[(U^i)'(\hat X_T^i)|\mathcal{F}_t\right]\cdot \bar w_t^i\right)>0\right\}>0.$$
Then there exists $c>0$ such that
$$\PP\left\{c\bar w_t>-\delta/2,\ \sum_{i=1}^d\left(\lambda_i\E_\PP\left[(U^i)'(\hat X_T^i)|\mathcal{F}_t\right]\cdot c \bar w_t^i\right)>0\right\}>0.$$
Denote the set in the above as $B$. If we replace $u_t$ in \eqref{e5} by $\tilde w_t:=c\bar w_t1_B$, then the inequality \eqref{e5} fails, which yields a contradiction.
\end{proof}

\begin{proof}[\textbf{Proof of \coref{c1}}]
The result follows from \prref{p1} and  \cite[Proposition A.5]{Sch4}.
\end{proof}

{\bf Contribution of the authors.} 
The first author conjectured the results of this paper (in form of Proposition \ref{t1} and Remark \ref{r1}); this conjecture is already mentioned in \cite{Wang10, Wang11} supervised by the first author regarding the content. The equivalence problem for solutions of vector-utility functions and (robust) no arbitrage for markets with transaction costs was among the open problems given by the first author to the second for discussion at the AMS Mathematics Research Community 2015 in Financial Mathematics. The collaboration of the second and the third author at this summer school led to the proof of this conjecture. 

{\bf Acknowledgement.} 
The last two authors gratefully acknowledge the financial support by the National Science Foundation, Division of Mathematical Sciences, under Grant No. 1321794.

\bibliographystyle{siam}

\end{document}